\gdef\@copyrightpermission{
	\begin{minipage}{0.3\columnwidth}
		\href{https://creativecommons.org/licenses/by/4.0/}{\includegraphics[width=0.90\textwidth]{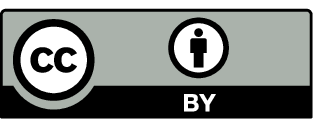}}
	\end{minipage}\hfill
	\begin{minipage}{0.7\columnwidth}
		\href{https://creativecommons.org/licenses/by/4.0/}{This work is licensed under a Creative Commons Attribution International 4.0 License.}
	\end{minipage}
	\vspace{5pt}
}
\newcommand{\ie}{\emph{i.e.},\xspace}
\newcommand{\eg}{\emph{e.g.},\xspace}
\newcommand{\fm}{\texttt{FREEDOM}}
\newtheorem{theorem}{Theorem}[section]
\newtheorem{lemma}[theorem]{Lemma}
\begin{document}

\title{A Tale of Two Graphs: Freezing and Denoising Graph Structures for Multimodal Recommendation}

\author{Xin Zhou}
\affiliation{%
	\institution{Alibaba-NTU Singapore Joint Research Institute \\Nanyang Technological University}
	\country{Singapore}
}
\email{xin.zhou@ntu.edu.sg}

\author{Zhiqi Shen}
\affiliation{%
	\institution{School of Computer Science and Engineering \\Nanyang Technological University}
	\country{Singapore}}
\email{zqshen@ntu.edu.sg}

\begin{abstract}
	Multimodal recommender systems utilizing multimodal features (\eg images and textual descriptions) typically show better recommendation accuracy than general recommendation models based solely on user-item interactions. 
	Generally, prior work fuses multimodal features into item ID embeddings to enrich item representations, thus failing to capture the latent semantic item-item structures.
	In this context, LATTICE proposes to learn the latent structure between items explicitly and achieves state-of-the-art performance for multimodal recommendations.
	However, we argue the latent graph structure learning of LATTICE is both inefficient and unnecessary.
	Experimentally, we demonstrate that freezing its item-item structure before training can also achieve competitive performance.
	Based on this finding, we propose a simple yet effective model, dubbed as \fm{}, that \underline{FREE}zes the item-item graph and \underline{D}en\underline{O}ises the user-item interaction graph simultaneously for \underline{M}ultimodal recommendation. 
	Theoretically, we examine the design of \fm{} through a graph spectral perspective and demonstrate that it possesses a tighter upper bound on the graph spectrum.
	In denoising the user-item interaction graph, we devise a degree-sensitive edge pruning method, which rejects possibly noisy edges with a high probability when sampling the graph.
	We evaluate the proposed model on three real-world datasets and show that \fm{} can significantly outperform current strongest baselines. 
	Compared with LATTICE, \fm{} achieves an average improvement of 19.07\% in recommendation accuracy while reducing its memory cost up to 6$\times$ on large graphs. The source code is available at: \url{https://github.com/enoche/FREEDOM}.
\end{abstract}

\begin{CCSXML}
	<ccs2012>
	<concept>
	<concept_id>10002951.10003317.10003347.10003350</concept_id>
	<concept_desc>Information systems~Recommender systems</concept_desc>
	<concept_significance>500</concept_significance>
	</concept>
	<concept>
	<concept_id>10002951.10003317.10003371.10003386</concept_id>
	<concept_desc>Information systems~Multimedia and multimodal retrieval</concept_desc>
	<concept_significance>500</concept_significance>
	</concept>
	</ccs2012>
\end{CCSXML}

\ccsdesc[500]{Information systems~Recommender systems}
\ccsdesc[500]{Information systems~Multimedia and multimodal retrieval}

\keywords{Multimodal Recommendation, Graph Freezing, Graph Denoising.}

\maketitle

\section{Introduction}
The increasing availability of multimodal information (\eg images, texts and videos) associated with items enables users to have a more in-depth and comprehensive understanding of the items they are interested in.
Consequently, multimodal recommender systems (MRSs) leveraging multimodal information are a recent trend in capturing the preferences of users for accurate item recommendations in online platforms, such as e-commerce and instant video platforms, etc.
Studies~\cite{he2016vbpr,zhang2021mining} demonstrate that MRSs usually have better performance than general recommendation models which only utilize the historical user-item interactions.

The primary challenge in MRSs is how to effectively integrate the multimodal information into the collaborative filtering (CF) framework.
Conventional studies either fuse the projected low-dimensional multimodal features with the ID embeddings of items via concatenation, summation operations~\cite{he2016vbpr, liu2017deepstyle} or leverage attention mechanisms to capture users' preferences on items~\cite{chen2017attentive, liu2019user, chen2019personalized}.
The surge of research on graph-based recommendations~\cite{wu2020graph, zhang2022diffusion, zhou2023selfcf, zhou2023layer} inspires a line of work~\cite{wei2020graph, wang2021dualgnn, zhang2021mining, zhou2023enhancing} that exploits the power of graph neural networks (GNNs) to capture the high-order semantics between multimodal features and user-item interactions.
Specifically, MMGCN~\cite{wei2019mmgcn} utilizes graph convolutional networks (GCNs) to propagate and aggregate information on every modality of the item.
On top of MMGCN, GRCN~\cite{wei2020graph} refines the user-item bipartite graph by weighing the user-item edges according to the affinity between user preference and item content.
To achieve better recommendation performance, researchers exploit auxiliary graph structures to explicitly capture the relations between users or items.
For instance, DualGNN~\cite{wang2021dualgnn} constructs the user-user relation graph to smoothen users' preferences with their neighbors via GNNs.
Nonetheless, the authors of LATTICE~\cite{zhang2021mining} argue that the latent item-item structures underlying the multimodal contents of items could lead to better representation learning.
Hence, LATTICE first dynamically constructs a latent item-item graph by considering raw and projected multimodal features learned from multi-layer perceptrons (MLPs). 
It then performs graph convolutions on the constructed latent item-item graph to explicitly incorporate item relationships into representation learning.
As a result, LATTICE could exploit both the high-order interaction semantic from the user-item graph and the latent item content semantic from the item-item structure.
Although this paradigm turns out to be effective, it poses a prohibitive cost by requiring computation and memory quadratic in the number of items. 

In this paper, we first experimentally disclose that the item-item structure learning of LATTICE is dispensable.
Specifically, we build an item-item graph directly from the raw multimodal contents of items and freeze it in training LATTICE.
Under the same evaluation settings and datasets of LATTICE, our empirical experiments in Table~\ref{tab:lattice-frozen} show that LATTICE with frozen item-item graph structures (\ie LATTICE-Frozen) gains slightly better performance on two out of three datasets.
To further refine the structures in both user-item bipartite graph and the item-item graph, we propose a graph structures \underline{FREE}zing and \underline{D}en\underline{O}ising \underline{M}ultimodal model for recommendation, dubbed as \fm{}.
To be specific, the item-item graph constructed by LATTICE retains the affinities between items as the edge weights, which may be noisy as the multimodal features are extracted from general pre-trained models (\eg Convolutional Neural Networks or Transformers).
\fm{} discretizes the weighted item-item graph into an unweighted one to enable information propagation in GCNs only depending on graph structure.
For the user-item graph, we further introduce a degree-sensitive edge pruning technique to denoise its structure to remove the noise caused by unintentional interactions or bribes~\cite{zhou2022bribery}.
Inspired by~\cite{chen2020simple}, we sample edges following a multinomial distribution with pre-calculated parameters to construct a sparsified subgraph. 
Finally, \fm{} learns the representations of users and items by integrating the unweighted item-item graph and the sparsified user-item subgraph.
We analyze \fm{} via a spectral perspective to demonstrate that \fm{} is capable of achieving a tighter low-pass filter.
We conduct comprehensive experiments on three real-world datasets to show that our proposed model can significantly outperform the state-of-the-art methods in terms of recommendation accuracy.

\begin{table}[bpt]
	\centering		
	\def\arraystretch{0.9}	
	\caption{LATTICE with frozen item-item graph structures (\ie LATTICE-Frozen) shows slightly better performance than its original version on Baby and Sports in terms of Recall and NDCG (Refer Section~\ref{sec:experiments} for detailed experiment settings).}
	\begin{tabular}{l l c c}
		\toprule
		Dataset & Metric & LATTICE & LATTICE-Frozen \\
		\midrule
		\multirow{4}{*}{Baby} & R@10 & 0.0547 & 0.0551 \\
		& R@20 & 0.0850 & 0.0873 \\
		& N@10 & 0.0292 & 0.0291 \\
		& N@20 & 0.0370 & 0.0373 \\
		\midrule
		\multirow{4}{*}{Sports} & R@10 & 0.0620 & 0.0626 \\
		& R@20 & 0.0953 & 0.0964 \\
		& N@10 & 0.0335 & 0.0336 \\
		& N@20 & 0.0421 & 0.0423 \\
		\midrule
		\multirow{4}{*}{Clothing} & R@10 & 0.0492 & 0.0434\\
		& R@20 & 0.0733 & 0.0635 \\
		& N@10 & 0.0268 & 0.0227 \\
		& N@20 & 0.0330 & 0.0279 \\
		\bottomrule
	\end{tabular}
	\label{tab:lattice-frozen}
	\vspace{-15pt}
\end{table}

\section{Related Work}
\label{sec:relatedwork}

\subsection{Multimodal Recommendation}
Multimodal recommendation models utilize deep or graph learning techniques to effectively incorporate the multimodal information of items into the classic CF paradigm for better recommendation performance.
Previous work~\cite{he2016vbpr, liu2017deepstyle} extends the BPR method~\cite{rendle2009bpr} by fusing the visual and/or style content of items with their ID embeddings and obtains a considerable performance boost.
To further differentiate users' preferences on multimodal information, attention mechanisms are exploited in multimodal recommendation models.
For instance, VECF~\cite{chen2019personalized} utilizes the VGG model~\cite{simonyan2014very} to perform pre-segmentation on images and captures the user's attention on different image regions.
MAML~\cite{liu2019user} uses a two-layer neural network to capture the user's preference for textual and visual features of an item.
Under the surge of GNNs applied in recommendation systems~\cite{wu2020graph}, researchers are inspired to inject high-order semantics into user/item representation learning via GNNs.
MMGCN~\cite{wei2019mmgcn} adopts the message passing mechanism of GCNs and constructs a modality-specific user-item bipartite graph, which can capture the information from multi-hop neighbors to enhance the user and item representations.
Following MMGCN, GRCN~\cite{wei2020graph} introduces a graph refine layer to refine the structure of the user-item interaction graph by identifying the noise edges and corrupting the false-positive edges. 
In DualGNN~\cite{wang2021dualgnn}, the authors argue that users' preferences may dynamically evolve with time. They introduce a user co-occurrence graph with a preference learning module to capture the user's preference for features from different modalities of an item. 
However, the aforementioned models mine the semantic information between items in an implicit manner and may lead to inferior performance.
To this end, LATTICE~\cite{zhang2021mining} explicitly constructs item-item relation graphs for each modality and fuses them together to obtain a latent item-item graph. 
It then dynamically updates the item-item graph with projected multimodal features from MLPs and achieves state-of-the-art recommendation accuracy.
Recently, we also see an emerging of applying self-supervised learning in multimodal recommendations~\cite{tao2022self, zhou2023bootstrap}.
For an in-depth exploration of multimodal recommender systems, we recommend consulting the comprehensive survey conducted by ~\cite{zhou2023comprehensive}.

\subsection{Denoising Graph Structures}
Studies on denoising graph structures can be roughly categorized as either permanent or temporary edge pruning.
Edges are either pruned or retained based on calculated scores.
The former work~\cite{wang2021denoising, wang2021dualgnn, luo2021learning, xu2022graph} leverages attention mechanisms or pre-defined functions to calculate a node's attention or affinity scores with its neighbors.
Based on the scores, they either permanently prune the edges satisfying a pre-defined condition or decrease the weights on the edges.
The latter work~\cite{hamilton2017inductive, rong2020dropedge} iteratively prunes the edges of the graph to extract a smaller subgraph for graph learning~\cite{luo2021learning}. 
It is worth mentioning that DropEdge~\cite{rong2020dropedge} is an intuitive and widely-used method that prunes edges following a uniform distribution. 
Other studies~\cite{franceschi2019learning, kazi2022differentiable} place a parameterized distribution or function on edges of the graph and learn the parameters in a supervised manner.
A sparsified graph is sampled from the original graph following the learned distributions or functions.
For computational complexity consideration, in this paper, we precalculate and fix the parameter of distribution for subgraph sampling.

\begin{figure*}[t]
	\centering
	\includegraphics[width=0.96\textwidth]{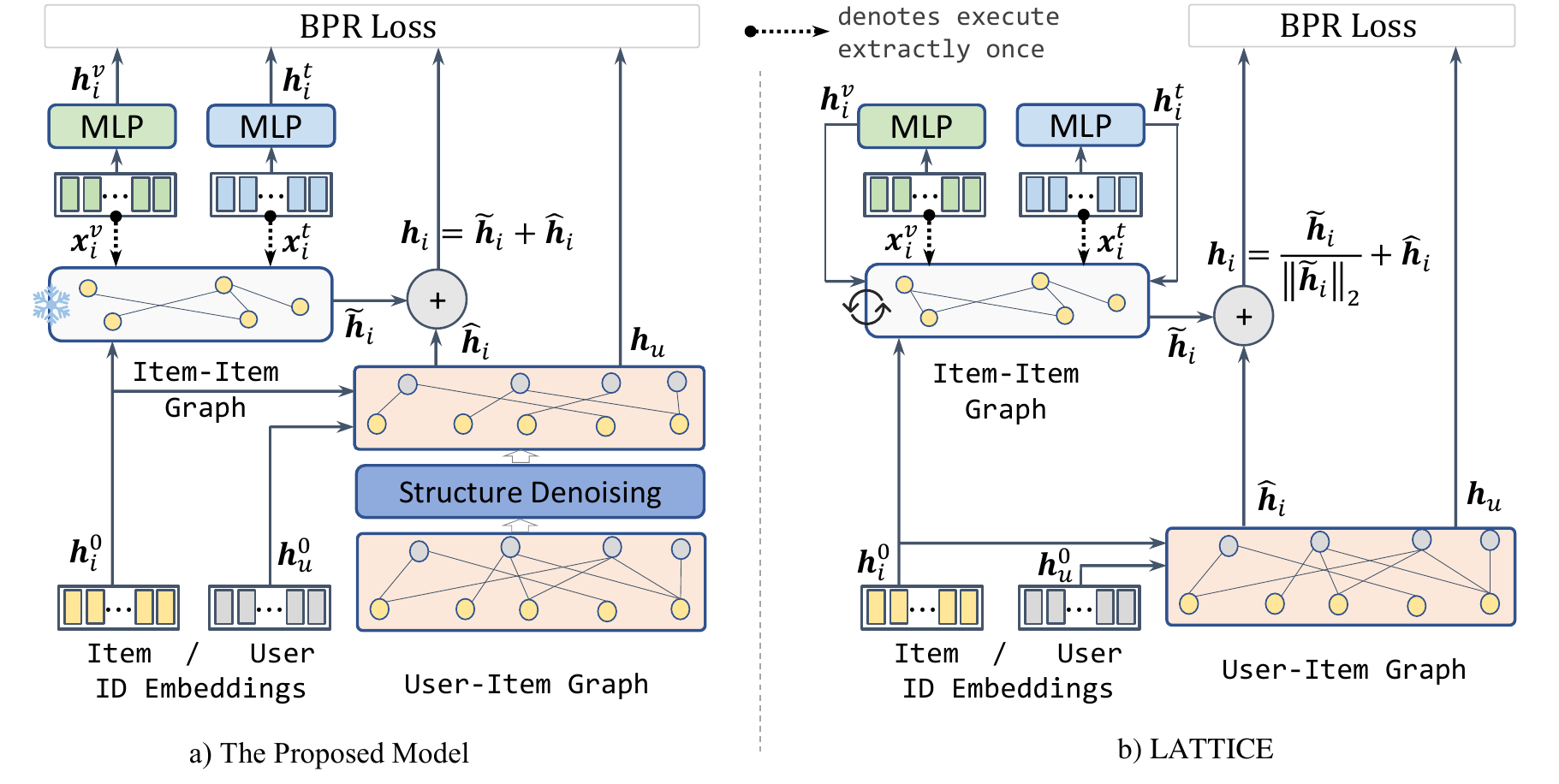} 
	\caption{Comparison of our proposed a) \fm{} and b) LATTICE~\cite{zhang2021mining}. \fm{} freezes the item-item graph and denoises the user-item graph simultaneously for multimodal recommendation.}
	\label{fig:framework}
\end{figure*}

\section{Freezing and Denoising Graph Structures}
\label{sec:freedom}
In this section, we elaborate on each component of \fm{} from graph construction to item recommendation. Fig.~\ref{fig:framework} shows the overall architecture compared with LATTICE.
\subsection{Constructing Frozen Item-Item Graph}
Following~\cite{zhang2021mining}, \fm{} also uses $k$NN to construct an initial modality-aware item-item graph $\bm{S}^m$ using raw features from each modality $m$.
Considering $N$ items, we calculate the similarity score $\bm{S}^m_{ij}$ between item pair $i$ and $j$ with a cosine similarity function on their raw features ($\bm{x}_i^m$ and $\bm{x}_j^m$) of modality $m$. That is:
\begin{equation}
	\bm{S}_{i j}^m = \frac{{(\bm{x}_i^m)}^\top \bm{x}_j^m}{ \| \bm{x}_i^m \| \| \bm{x}_j^m \|},
	\label{eq:sim}
\end{equation}
where $\bm{S}^m_{ij}$ is the $i$-th row, $j$-th column element of matrix $\bm{S}^m \in \mathbb{R}^{N \times N}$.
We further employ $k$NN sparsification~\cite{chen2009fast} and convert the weighted $\bm{S}^m$ into an unweighted matrix. That is, for each item $i$, we only retain the connection relations of its top-$k$ similar edges:
\begin{equation}
	\widehat{\bm{S}}^m_{ij}=\begin{cases}
		1, \enspace & {\bm{S}}^m_{i j} \in \operatorname{top-}k({\bm{S}}^m_{i}), \\
		0, \enspace & \text{otherwise}.
	\end{cases}
	\label{eq:topk}
\end{equation}
Each element in $\widehat{\bm{S}}^m$ is either 0 or 1, with 1 denoting a latent connection between the two items.  We empirically fixed the value of $k$ at 10.
Note that $\widehat{\bm{S}}^m$ is different from the weighted similarity matrix of LATTICE, which uses the affinity values between items as its elements.
We normalize the discretized adjacency matrix $\widehat{\bm{S}}^m$ as $\widetilde{\bm{S}}^m = ({\bm{D}^m})^{-\frac{1}{2}} \widehat{\bm{S}}^m ({\bm{D}^m})^{-\frac{1}{2}}$, where $\bm{D}^m \in \mathbb{R}^{N \times N}$ is the diagonal degree matrix of $\widehat{\bm{S}}^m$ and $\bm{D}_{ii}^m = \sum_{j}\widehat{\bm{S}}^m_{i j}$. 
With the resulted modality-aware adjacency matrices, we construct the latent item-item graph by aggregating the structures from each modality:
\begin{equation}
	{\bm{S}} = \sum_{m \in \mathcal{M}} \alpha_{m} \widetilde{\bm{S}}^m,
\end{equation}
where $\bm{S} \in \mathbb{R}^{N \times N}$, \(\alpha_m\) is the importance score of modality $m$ and $\mathcal{M}$ is the set of modalities. 
Same as other studies~\cite{liu2019user,zhang2021mining}, we consider visual and textual modalities denoted by $\mathcal{M} = \{v, t\}$ in this paper.
The importance score can be learned via parametric functions. 
Here, we reduce the model parameters by introducing a hyperparameter \(\alpha_v\) denoting the importance of visual modality in constructing \(\bm{S}\). We let \(\alpha_t = 1 - \alpha_v\).

Finally, we freeze the latent item-item graph, which can greatly improve the efficiency of \fm{}.
To be specific, the construction of the similarity matrix under modality $m$ in Eq.~\eqref{eq:sim} requires computational complexity of $\mathcal{O}(N^2 d_m)$, where $d_m$ is the dimension of raw features.
As shown in Fig.~\ref{fig:framework}, \fm{} pre-calculates (dotted lines) the item-item graph before training and freezes it during model training. 
As a result, it removes the computational burden of $\mathcal{O}(N^2 d_m)$ for graph construction in training.

\subsection{Denoising User-Item Bipartite Graph}
In this section, we introduce a degree-sensitive edge pruning to denoise the user-item bipartite graph.
The idea is derived from recent researches on model sparsification~\cite{rong2020dropedge} and simplification~\cite{chen2020simple}.
Specifically, DropEdge~\cite{rong2020dropedge} randomly drops a certain ratio of edges in training. In~\cite{chen2020simple}, the authors verify that popular nodes are more likely to suffer from over-smoothing.
Inspired by the finding, we sparsify the graph by pruning superfluous edges following a degree-sensitive probability. 

Formally, we denote a user-item graph as $\mathcal{G} = (\mathcal{V},\mathcal{E})$, where $\mathcal{V}$ is the set of nodes and $\mathcal{E}$ is the set of edges. 
The number of users and items in the user-item graph is $M$ and $N$, respectively. We have $M+N = |\mathcal{V}|$, where $|\cdot|$ denotes the cardinality of a set.
We construct a symmetric adjacency matrix $\bm{A} \in \mathbb{R}^{|\mathcal{V}| \times |\mathcal{V}|}$ from the user-item interaction matrix 
$\bm{R} \in \mathbb{R}^{M \times N}$:
\begin{equation}
	\bm{A} =
	\begin{pmatrix}
		\bm{0} & \bm{R} \\
		\bm{R}^\top & \bm{0}
	\end{pmatrix},
	\label{eq:adj-matrix}
\end{equation}
and each entry $\bm{A}_{ui}$ of $\bm{A}$ is set to 1, if user $u$ has interacted with item $i$, otherwise, $\bm{A}_{ui}$ is set to 0. 

Given a specific edge $e_{k} \in \mathcal{E}, (0 \le k < |\mathcal{E}|)$ which connects node $i$ and $j$, we calculate its probability as $p_k = \frac{1}{\sqrt {\omega_i} \sqrt {\omega_j}}$,
where $\omega_i$ and $\omega_j$ are the degrees of nodes $i$ and $j$ in graph $\mathcal{G}$, respectively.
Usually, we prune a certain proportion $\rho$ of edges of the graph.
That is, the number of edges should be pruned is $\lfloor \rho |\mathcal{E}| \rfloor$, where $\lfloor \cdot \rfloor$ is the floor function.
As a result, the number of retained edges is $n = \lceil |\mathcal{E}| (1 - \rho) \rceil$.
Thus, we sample edges from the multinomial distribution with index $n$ and parameter vector $\bm{p}= \langle p_0, p_1, \cdots, p_{|\mathcal{E}|-1} \rangle$. 
In this way, edges connecting high-degree nodes have a low probability to be sampled from the graph. That is, these edges are more likely to be pruned in $\mathcal{G}$.
We then construct a symmetric adjacency matrix $\bm{A}_{\rho}$ based on the sampled edges following Eq.~\eqref{eq:adj-matrix}.
In line with prior latent item-item graph, we also perform the re-normalization trick on $\bm{A}_{\rho}$, resulting as $\widehat{\bm{A}}_{\rho}$.
Same as DropEdge, \fm{} prunes the user-item graph and normalizes the sampled adjacency matrix iteratively in each training epoch.
However, we resort to the original normalized adjacency matrix $\widehat{\bm{A}} = \bm{D}^{-1/2} \bm{A} \bm{D}^{-1/2}$ in model inference.

\subsection{Integration of Two Graphs for Learning}
We perform graph convolutions on both graphs, that is, we employ a light-weighted GCN~\cite{he2020lightgcn} for information propagation and aggregation on $\bm{S}$ and $\widehat{\bm{A}}_{\rho}$. Specifically, the graph convolution over the item-item graph is defined as:
\begin{equation}
	\widetilde{\bm{h}}_{i}^{l} = \sum_{j \in \mathcal{N}(i)} {\bm{S}}_{ij} \widetilde{\bm{h}}_{j}^{l-1},
\end{equation}
where $\mathcal{N}(i)$ is the neighbor items of $i$, $\widetilde{\bm{h}}_{i}^{l} \in \mathbb{R}^{d}$ is the $l$-th layer item representation of item $i$, $\widetilde{\bm{h}}_{i}^{0}$ denotes its corresponding ID embedding vector and $d$ is the dimension of an item or user ID embedding. 
We stack $L_{ii}$ convolutional layers on the item-item graph $\bm{S}$ and obtain the last layer representation $\widetilde{\bm{h}}_{i}^{L_{ii}}$ as the representation $\widetilde{\bm{h}}_i \in \mathbb{R}^{d}$ of $i$ from the multimodal view:
\begin{equation}
	\widetilde{\bm{h}}_i = \widetilde{\bm{h}}_{i}^{L_{ii}}.
	\label{eq:ii_emb}
\end{equation}
Analogously, in the user-item graph, we perform $L_{ui}$ convolutional operations on $\widehat{\bm{A}}_{\rho}$ and obtain embedding of a user $\widehat{\bm{h}}_u \in \mathbb{R}^{d}$ or an item $\widehat{\bm{h}}_i \in \mathbb{R}^{d}$ with a readout function on all the hidden representations resulted in each layer:
\begin{equation}
	\begin{split}
		\widehat{\bm{h}}_u &= \operatorname{READOUT} (\widehat{\bm{h}}_u^0, \widehat{\bm{h}}_u^1, \cdots, \widehat{\bm{h}}_u^{L_{ui}}), \\
		\widehat{\bm{h}}_i &= \operatorname{READOUT} (\widehat{\bm{h}}_i^0, \widehat{\bm{h}}_i^1, \cdots, \widehat{\bm{h}}_i^{L_{ui}}),
	\end{split}
	\label{eq:ui_emb}
\end{equation}
where the \(\operatorname{READOUT} \) function can be any differentiable function, $\widehat{\bm{h}}_u^0$ and $\widehat{\bm{h}}_{i}^{0} = \widetilde{\bm{h}}_{i}^{0}$ denotes the ID embeddings of user $u$ and item $i$, respectively. We use the default mean function of LightGCN~\cite{he2020lightgcn} for embedding readout.

Finally, we use the user representation output by the user-item graph as its final representation.
For the item, we sum up the representations obtained from the two graphs as its final representation. 
\begin{equation}
	\begin{split}
		{\bm{h}_u} &= \widehat{\bm{h}}_u, \\
		{\bm{h}_i} &= \widetilde{\bm{h}}_i + \widehat{\bm{h}}_i.
	\end{split}
\end{equation}
To fully explore the raw features, we project multimodal features of item $i$ in each modality via MLPs. 
\begin{equation}
	\bm{h}_i^m = \bm{x}_i^m \bm{W}_m  + \bm{b}_m,
\end{equation}
where $\bm{W}_m \in \mathbb{R}^{d_m \times d}, \bm{b}_m \in \mathbb{R}^d$ denote the linear transformation matrix and bias in the MLP.
In this way, each uni-modal representation $\bm{h}_i^m$ shares the same latent space with its ID embedding $\bm{h}_i$.

For model optimization, we adopt the pairwise Bayesian personalized ranking (BPR) loss~\cite{rendle2009bpr}, which encourages the prediction of a positive user-item pair to be scored higher than its negative pair:
\begin{align}
	\mathcal{L}_{bpr} = \sum_{(u,i,j)\in \mathcal{D}} \Bigl( & -\operatorname{log} \sigma(\bm{h}_u^\top \bm{h}_i - \bm{h}_u^\top \bm{h}_j) + \nonumber \\ 
	& \lambda  \sum_{m \in \mathcal{M}} -\operatorname{log} \sigma(\bm{h}_u^\top \bm{h}_i^m - \bm{h}_u^\top \bm{h}_j^m) \Bigr),
	\label{eq:loss}
\end{align}
where $\mathcal{D}$ is the set of training instances, and each triple $(u,i,j)$ satisfies $\bm{A}_{ui} = 1 ~\text{and}~ \bm{A}_{uj} = 0$. $\sigma(\cdot)$ is the sigmoid function and \(\lambda\) is a hyperparameter of \fm{} to weigh the reconstruction losses between user-item ID embeddings and projected multimodal features.

\subsection{Top-$K$ Recommendation}
To generate item recommendations for a user, we first predict the interaction scores between the user and candidate items. Then, we rank candidate items based on the predicted interaction scores in descending order, and choose $K$ top-ranked items as recommendations to the user. 
The interaction score is calculated as:
\begin{equation}
	r(\bm{h}_u, \bm{h}_i) = \bm{h}_u^\top \bm{h}_i.
\end{equation}
A high score suggests that the user prefers the item.
Note that we only use user and item ID embeddings for prediction, because we empirically find the adding of projected item multimodal features in prediction does not show improvement on performance.
However, the item multimodal representations can partially benefit the learning of user representations in \fm{} via Eq.~\eqref{eq:loss}.

\section{Spectral Analysis}
In this section, we examine the benefits of freezing the item-item graph in \fm{} through the lens of spectral analysis.
We show that it possesses a tighter upper bound on the graph spectrum.
In addition, we empirically calculate the largest eigenvalues of the normalized graph Laplacian on experimental datasets to validate our analysis.

\subsection{Preliminaries}
Without loss of generality, we analyze the normalized item-item adjacency matrix $\widetilde{\bm{S}}^m = ({\bm{D}^m})^{-\frac{1}{2}} \widehat{\bm{S}}^m ({\bm{D}^m})^{-\frac{1}{2}}$ within a specific modality $m$. We denote the corresponding matrix of LATTICE as $\widetilde{\bm{S}'}^m = ({\bm{D}'^m})^{-\frac{1}{2}} \widehat{\bm{S}'}^m ({\bm{D}'^m})^{-\frac{1}{2}}$. We omit $m$ for clarity in the following analysis.

\subsection{Theoretical Analysis}

\begin{lemma}
The eigenvalues of the frozen matrix in \fm{} possess a tighter upper bound.
\end{lemma}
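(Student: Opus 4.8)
The plan is to first reduce the statement to a claim about a single canonical spectral quantity. Since both $\widetilde{\bm{S}}$ and $\widetilde{\bm{S}'}$ are symmetrically normalized adjacency matrices of graphs with nonnegative edge weights and zero diagonal, the Perron--Frobenius theorem pins their largest eigenvalue to exactly $1$ on each connected component, and the quadratic form $\bm{x}^\top(\bm{I}-\widetilde{\bm{S}})\bm{x} = \tfrac{1}{2}\sum_{i,j} w_{ij}(y_i-y_j)^2$ with $\bm{y}=\bm{D}^{-1/2}\bm{x}$ confines the whole spectrum to $[-1,1]$. Hence the top of the spectrum carries no information, and the meaningful ``upper bound on the graph spectrum'' is the largest eigenvalue of the normalized Laplacian $\bm{\mathcal{L}}=\bm{I}-\widetilde{\bm{S}}$, namely $\mu_{\max}=1-\lambda_{\min}(\widetilde{\bm{S}})$, which is precisely the quantity governing how aggressively the $L_{ii}$-layer propagation $\bm{S}^{L_{ii}}$ acts as a low-pass filter. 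I would therefore restate the lemma as $\mu_{\max}(\bm{\mathcal{L}}) \le \mu_{\max}(\bm{\mathcal{L}'})$.

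The first key observation I would exploit is that symmetric normalization is invariant under a global rescaling of the weights: multiplying every edge weight by the same constant leaves $\bm{D}^{-1/2}\bm{W}\bm{D}^{-1/2}$ unchanged. Consequently, if LATTICE's affinities were uniform across edges, $\widetilde{\bm{S}'}$ would coincide exactly with the binary $\widetilde{\bm{S}}$ of \fm{} and the two spectra would be identical. This isolates the entire source of the gap: it is the \emph{heterogeneity} of the cosine affinities $s_{ij}\in(0,1]$, not their magnitude, that can enlarge $\mu_{\max}$. I would then bound $\mu_{\max}$ for each matrix with a Gershgorin-disc argument on $\bm{\mathcal{L}}$: every eigenvalue lies in $[1-R_i,\,1+R_i]$ with radius $R_i=\sum_{j\sim i} w_{ij}/\sqrt{d_i d_j}$. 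For \fm{} the $k$NN construction makes the graph (near-)$k$-regular with unit weights, so $R_i=\tfrac{1}{\sqrt{d_i}}\sum_{j\sim i}d_j^{-1/2}\approx 1$ and the upper bound collapses with essentially no slack; for LATTICE the fractional weights and heterogeneous weighted degrees $d'_i=\sum_{j\sim i}s_{ij}$ make the corresponding radius $R'_i$ strictly exceed $1$ at the node carrying the most lopsided affinity profile, yielding a strictly looser bound.

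The main obstacle, and the reason I expect the clean statement to require care, is that the two matrices admit no termwise or monotone comparison: LATTICE's numerators $s_{ij}\le 1$ are smaller than \fm{}'s $1$, but its weighted degrees $d'_i\le d_i$ are also smaller, so the ratios defining $R'_i$ move in competing directions and one cannot simply dominate $\widetilde{\bm{S}'}$ by $\widetilde{\bm{S}}$ entrywise. The crux is therefore to show that weight heterogeneity nets out in favor of the binary graph, i.e. that spreading unequal affinities across the edges incident to a node can only raise $\max_i R_i$ relative to the uniform case. I would attack this by fixing each node's neighborhood and treating $R_i$ as a function of the affinity vector $(s_{ij})_{j\sim i}$ subject to the induced degrees, arguing via a convexity / rearrangement-type inequality that the uniform profile minimizes it, with the Perron--Frobenius fact $\max_i R'_i\ge\rho(\widetilde{\bm{S}'})=1$ guaranteeing the inequality is nondegenerate. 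Because pinning this down in full generality is delicate---the worst case genuinely depends on the empirical affinity distribution---I would complement the bound by directly computing $\mu_{\max}$ on the three datasets to confirm that \fm{}'s frozen matrices indeed attain the smaller largest Laplacian eigenvalue in practice.
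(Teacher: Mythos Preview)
Your proposal rests on two structural assumptions that do not hold for the matrices in this paper, and this causes the whole argument to aim at the wrong target. First, the $k$NN adjacency $\widehat{\bm S}^m$ is \emph{not} symmetric: item $i$ can have item $j$ among its top-$k$ neighbours without the converse being true, so $\widetilde{\bm S}=\bm D^{-1/2}\widehat{\bm S}\,\bm D^{-1/2}$ with the row-degree $\bm D$ is not a symmetric normalized adjacency and enjoys no $[-1,1]$ spectral confinement. Second, the diagonal is not zero: since $\bm S^m_{ii}=1$, every item is its own nearest neighbour, so self-loops are present and the diagonal entries of $\widetilde{\bm S}$ equal $1/d_{ii}$. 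Both points are consistent with the paper's empirical Table~\ref{tab:egien}, where the largest eigenvalues exceed $1$; your Perron--Frobenius step (``the top of the spectrum carries no information'') is therefore simply false here, and the reformulation in terms of $\mu_{\max}(\bm{\mathcal L})$ is a different question from the one stated.

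What the paper actually proves is much more modest and completely elementary. It bounds any eigenvalue by the maximal row sum and then by $n\cdot\max_{i,j}\tilde s_{ij}$ (a Gershgorin/row-sum inequality), and then compares this crude upper bound for the two models. The key observation is that the maximal entry of the normalized matrix sits on the diagonal, $\max_{i,j}\tilde s_{ij}=\max_i 1/d_{ii}$; since \fm{} replaces each retained affinity $s'_{ij}\le 1$ by $1$, its degrees dominate, $d_{ii}=\sum_j 1\ge \sum_j s'_{ij}=d'_{ii}$, and hence $\max_i 1/d_{ii}\le \max_i 1/d'_{ii}$. That inequality between the two \emph{bounds} is the content of the lemma; there is no attempt to compare $\lambda_{\min}$, no convexity or rearrangement argument, and no need to control the heterogeneity of the affinity profile. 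Your Gershgorin instincts were right, but you should apply them to the top of the spectrum of the (non-symmetric, self-looped) $\widetilde{\bm S}$ directly and compare the resulting bounds, rather than reinterpreting the statement via a symmetric Laplacian that does not describe these matrices.
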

\begin{proof}
	Let $\lambda$ be an eigenvalue of the non-negative matrix $\widetilde{\bm{S}}$. 
	\begin{align}
	\widetilde{\bm{S}} \bm{x}=\lambda \bm{x} & \Longrightarrow \lambda\left|x_i\right|=\left|\sum_{j=1}^n \tilde{s}_{i j} x_j\right| \leq\left(\sum_{j=1}^n \tilde{s}_{i j}\right) \max _j\left|x_j\right| \nonumber \\
	& \Longrightarrow \lambda \max _i\left|x_i\right| \leq \max _i\left(\sum_{j=1}^n \tilde{s}_{i j}\right) \max _j\left|x_j\right| \nonumber\\
	& \Longrightarrow \lambda \leq \max _i\left(\sum_{j=1}^n \tilde{s}_{i j}\right) \leq n \max _{i,j} \tilde{s}_{i j}
	\label{eq:max_row}
	\end{align}
	Each element in $\widetilde{\bm{S}}$ of \fm{} and $\widetilde{\bm{S}'}$ of LATTICE is positive, with the form of $\tilde{s}_{ij}/\sqrt{d_{ii} d_{jj}}$ and $\tilde{s'}_{ij}/\sqrt{d'_{ii} d'_{jj}}$. 
	As function $g(\tilde{s'}_{ij}) = \tilde{s'}_{ij}/(\tilde{s'}_{ij} +\sum_{k\neq i} \tilde{s'}_{kj})$ is monotonically increasing, and maximizing Eq.~\eqref{eq:max_row} requires the maximized $\tilde{s'}_{ij} \geq \bar{d_{ii}}$ (average value of node degree). Hence, we can derive $\max _{i,j} \tilde{s}_{i j} \leq \max _{i,j} \tilde{s'}_{i j}$. 
\end{proof}
In consideration of the page limit, we have moved the more detailed proof to Appendix~\ref{append:lemma}. Readers can access it for a more comprehensive understanding of our work. 

\begin{table}[]
	\caption{Comparison of the largest eigenvalues of \fm{} and LATTICE on different datasets (lower is better).}
	\vspace{-8pt}
	\begin{tabular}{p{1.8cm}lll}
		\toprule
		Model   & Baby   & Sports & Clothing \\
		\midrule
		\fm{} & 1.1685 & 1.1016 & 1.0932   \\
		LATTICE   & 1.1796 & 1.1180 & 1.1210  \\
		\bottomrule
	\end{tabular}
	\vspace{-8pt}
	\label{tab:egien}	
\end{table}

\subsection{Empirical Study}
We compute the largest eigenvalues of the item-item matrix in both \fm{} and LATTICE, as shown in Table~\ref{tab:egien}. The results validate our previous analysis. A tighter upper bound of eigenvalues in the item-item graph ensures that the frozen item-item graph in FREEDOM can act as a low-pass filter, eliminating the effect of negative coefficients at large frequencies. We further investigate the impact of the frozen item-item graph in \fm{} and the learnable graph in LATTICE on validation accuracy. The plots in Fig.~\ref{fig:val-all} reveal that graphs in FREEDOM trained with this property achieve higher accuracy and more stable performance than LATTICE. 

\begin{figure}[t!]
	\centering
	\subfloat[Baby]{\includegraphics[width=0.46\linewidth]{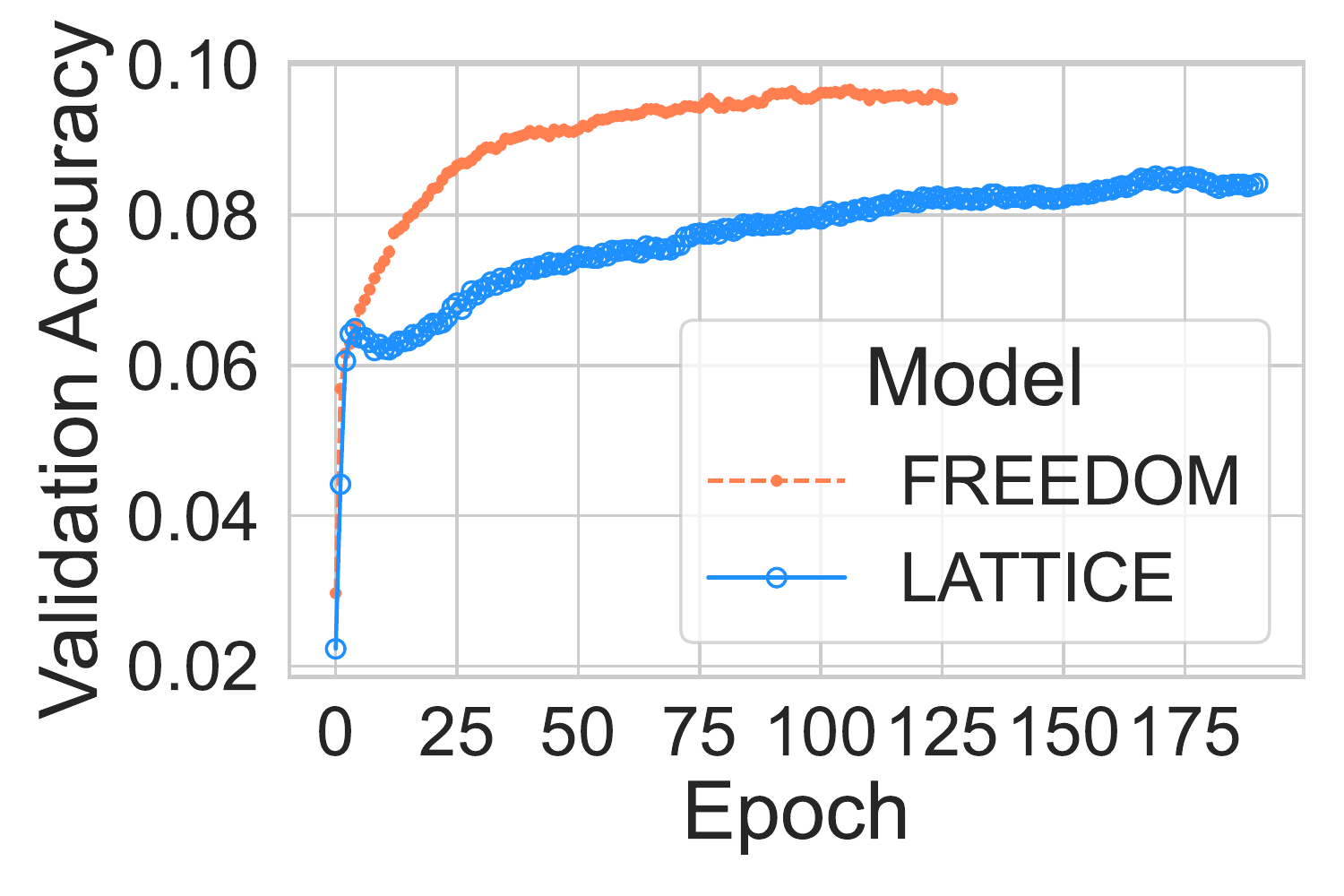}} \hspace{0.1cm}
	\subfloat[Clothing]{\includegraphics[width=0.46\linewidth]{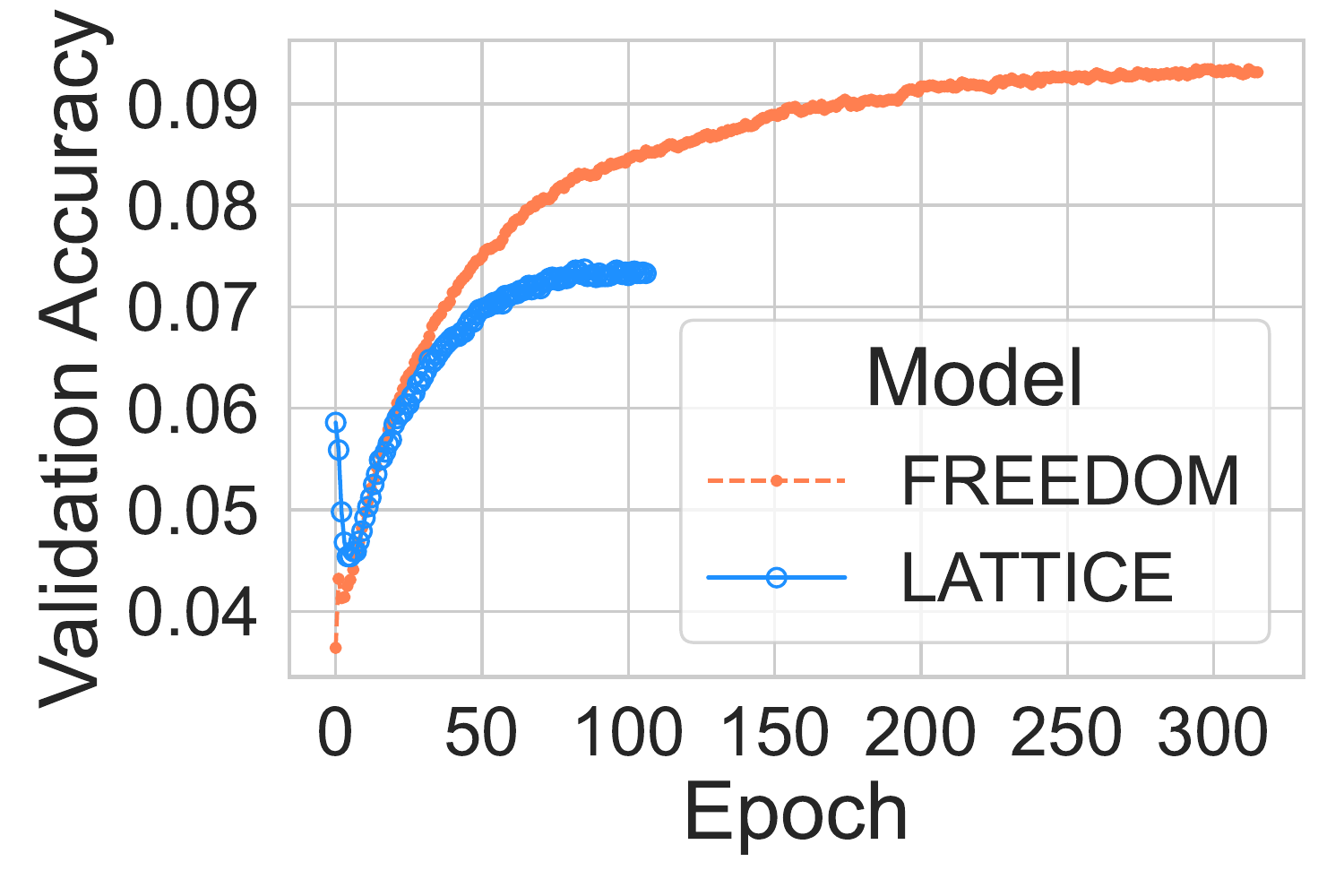}}
	\vspace{-3pt}
	\caption{Validation accuracy (\ie Recall@20) of \fm{} and LATTICE using frozen and learnable item-item matrices, respectively.}
	\label{fig:val-all}
	\vspace{-3pt}
\end{figure}

\section{Experiments}
\label{sec:experiments}
To evaluate the effectiveness and efficiency of our proposed \fm{}, we conduct experiments to answer the following research questions on three real-world datasets.
\begin{itemize}
	\item{\textbf{RQ1}}: How does \fm{} perform compared with the state-of-the-art methods for recommendation? As our model improves LATTICE~\cite{zhang2021mining} by freezing and denoising the graph structures, how about its improvement over LATTICE?
	\item{\textbf{RQ2}}: How efficient of our proposed \fm{} in terms of computational complexity and memory cost?
	\item{\textbf{RQ3}}: How do different components in \fm{} influence its recommendation accuracy? 
	\item{\textbf{RQ4}}: How sensitive is our model under the perturbation of hyperparameters?
\end{itemize}

\begin{table}[bpt]
	\centering	
	\def\arraystretch{0.9}	
	\caption{Statistics of the experimental datasets.}
	\begin{tabular}{l r r r r}
		\toprule
		Dataset & \# Users & \# Items & \# Interactions & Sparsity \\
		\midrule
		Baby & 19,445 & 7,050 & 160,792 & 99.88\% \\
		Sports & 35,598 & 18,357 & 296,337 & 99.95\%\\
		Clothing & 39,387 & 23,033 & 278,677 & 99.97\%\\
		\bottomrule
	\end{tabular}
	\vspace{-12pt}
	\label{tab:datasets}
\end{table}

\subsection{Experimental Datasets}
\label{sec:exp_data}
Following~\cite{zhang2021mining}, we conduct experiments on three categories of the Amazon review dataset~\cite{he2016ups}: (a) \emph{Baby}, (b) \emph{Sports and Outdoors}, and (c) \emph{Clothing, Shoes and Jewelry}. For simplicity, we denote them as \emph{Baby}, \emph{Sports} and \emph{Clothing}, respectively.
The Amazon review dataset provides both visual and textual information about the items and varies in the number of items under different categories. 
The raw data of each dataset are pre-processed with a 5-core setting on both items and users, and their filtered results are presented in Table~\ref{tab:datasets}.
We directly use the 4,096-dimensional visual features extracted by pre-trained Convolutional Neural Networks~\cite{he2016ups}.
For the textual modality, we extract a 384-dimensional textual embedding by utilizing sentence-transformers~\cite{reimers2019sentence}~\footnote{We use \textit{all-MiniLM-L6-v2} because it is much faster and offers better quality than the pre-trained model used in LATTICE. Except for the pre-trained model, our experimental setting completely follows LATTICE~\cite{zhang2021mining}.} on the concatenation of the title, descriptions, categories, and brand of each item. 

\subsection{Baseline Methods}
To demonstrate the effectiveness of \fm{}, we compare it with the state-of-the-art recommendation methods in two categories.

First category comprises two general CF models that recommend personalized items to users solely based on user-item interactions:
\begin{itemize}
	\item \textbf{BPR}~\cite{rendle2009bpr} optimizes the latent representations of users and items under the framework of matrix factorization (MF) with a BPR loss.
	\item \textbf{LightGCN}~\cite{he2020lightgcn} simplifies the vanilla GCN by removing its non-linear activation and feature transformation layers for recommendation.
\end{itemize}

While the second line of work includes six multimodal recommendation models that leverage both user interactions and multimodal information about items for recommendation:
\begin{itemize}
	\item \textbf{VBPR}~\cite{he2016vbpr} incorporates visual features for user preference learning under the MF framework and BPR loss. Following~\cite{zhang2021mining, wang2021dualgnn}, we concatenate the multimodal features of an item as its visual feature for user preference learning.
	\item \textbf{MMGCN}~\cite{wei2019mmgcn} fuses the representations generated by GCNs in each modality of items for recommendation.
	\item \textbf{GRCN}~\cite{wei2020graph} refines the user-item bipartite graph with the removal of false-positive edges for multimodal recommendation. 
	Based on the refined graph, it then learns user and item representations by performing information propagation and aggregation via GCNs.
	\item \textbf{DualGNN}~\cite{wang2021dualgnn} augments the representations of users in GCNs with an additional user-user correlation graph that is extracted from the user-item graph.
	\item \textbf{LATTICE}~\cite{zhang2021mining} learns the latent semantic item-item structures from the multimodal features for recommendation. 
	As \fm{} improves LATTICE, we highlight its improvement over LATTICE in Table~\ref{tab:perform}.
	\item \textbf{SLMRec}~\cite{tao2022self} incorporates self-supervised learning into multimedia recommendation. It proposes three data augmentations to uncover the multimodal patterns in data for contrastive learning.
\end{itemize}

\subsection{Evaluation Protocols}
For a fair comparison, we adopt the same evaluation settings of~\cite{wang2021dualgnn,zhang2021mining, tao2022self}. Specifically, we use two widely-used evaluation protocols for top-$K$ recommendation: Recall@$K$ and NDCG@$K$, which we refer to as R@$K$ and N@$K$ in brief. We report the average metrics of all users in the test set under both $K=10$ and $K=20$.
For each user in the evaluated dataset, we randomly split 80\% of historical interactions for training, 10\% for validation and the remaining 10\% for testing.
During training, we conduct the negative sampling strategy to pair each observed user-item interaction in the training set with one negative item that the user does not interact with before.
We use the all-ranking protocol to compute the evaluation metrics for recommendation accuracy comparison. 

\subsection{Implementation and Hyperparameter Settings}
Following existing work~\cite{he2020lightgcn, zhang2021mining}, we fix the embedding size of both users and items to 64 for all models, initialize the embedding parameters with the Xavier method~\cite{glorot2010understanding}, and use Adam~\cite{kingma2015adam} as the optimizer. For a fair comparison, we carefully tune the parameters of each model following their published papers. All models are implemented by PyTorch~\cite{paszke2019pytorch} and evaluated on a Tesla V100 GPU card with 32 GB memory.
To reduce the hyperparameters searching space of \fm{}, we fix the number of GCN layers in the user-item bipartite graph and the item-item graph at $L_{ui}=2$ and $L_{ii}=1$, respectively. 
We empirically fix the hyperparameter of $\lambda$ at $1e-03$ and the visual feature ratio $\alpha_v$ at 0.1.
We then perform a grid search on other hyperparameters of \fm{} across all datasets to conform to its optimal settings. Specifically, the ratio $\rho$ of the degree-sensitive edge pruning is searched from \{0.8, 0.9\}.
For convergence consideration, the early stopping and total epochs are fixed at 20 and 1000, respectively. 
Following~\cite{zhang2021mining}, we use R@20 on the validation data as the training stopping indicator.
To ensure a fair comparison, all baseline models as well as our proposed model have been integrated into the unified multimodal recommendation framework, MMRec~\cite{zhou2023mmrec}.

\begin{table*}[bpt]
	\small
	\centering	
	\def\arraystretch{1.1}	
	\caption{Overall performance achieved by different recommendation methods in terms of Recall and NDCG.  We mark the global best results on each dataset under each metric in \textbf{boldface} and the second best is \underline{underlined}. Improvement percentage (\emph{improv.}) is calculated as the ratio of performance increment from LATTICE to \fm{} under each dataset and metric. To verify the stability of our method, we conduct experiments across 5 different seeds and state the improvements are statistically significant at the level of $p$ < 0.01 with a paired $t$-test.}
	\begin{tabular}{llcccccccccc}
		\toprule
		\multirow{2}{*}{Dataset} & \multirow{2}{*}{Metric} & \multicolumn{2}{c}{General CF model} & \multicolumn{7}{c}{Multimodal model} & \\
		\cmidrule(lr){3-4} \cmidrule(lr){5-12}
		& & BPR & LightGCN & VBPR & MMGCN & GRCN & DualGNN & SLMRec & LATTICE & \fm{} & \emph{improv.}\\
		\midrule
		\multirow{4}{*}{Baby} & R@10 & 0.0357 & 0.0479 & 0.0423 & 0.0421 & 0.0532 & 0.0513 & 0.0521 & \underline{0.0547} & \textbf{0.0627} & 14.63\% \\
		& R@20 & 0.0575 & 0.0754 & 0.0663 & 0.0660 & 0.0824 & 0.0803 & 0.0772 & \underline{0.0850} & \textbf{0.0992} & 16.71\% \\
		& N@10 & 0.0192 & 0.0257 & 0.0223 & 0.0220 & 0.0282 & 0.0278 & 0.0289 & \underline{0.0292} & \textbf{0.0330}  & 13.01\% \\
		& N@20 & 0.0249 & 0.0328 & 0.0284 & 0.0282 & 0.0358 & 0.0352 & 0.0354 & \underline{0.0370} & \textbf{0.0424}  & 14.59\% \\
		\midrule
		\multirow{4}{*}{Sports} & R@10 & 0.0432 & 0.0569 & 0.0558 & 0.0401 & 0.0599 & 0.0588 & \underline{0.0663} & 0.0620 & \textbf{0.0717} & 15.65\% \\
		& R@20 & 0.0653 & 0.0864 & 0.0856 & 0.0636 & 0.0919 & 0.0899 & \underline{0.0990} & 0.0953 & \textbf{0.1089}  & 14.27\% \\
		& N@10 & 0.0241 & 0.0311 & 0.0307 & 0.0209 & 0.0330 & 0.0324 & \underline{0.0365} & 0.0335 & \textbf{0.0385}  & 14.93\% \\
		& N@20 & 0.0298 & 0.0387 & 0.0384 & 0.0270 & 0.0413 & 0.0404 & \underline{0.0450} & 0.0421 & \textbf{0.0481}  & 14.25\% \\
		\midrule
		\multirow{4}{*}{Clothing} & R@10 & 0.0206 & 0.0361 & 0.0281 & 0.0227 & 0.0421 & 0.0452 & 0.0442 & \underline{0.0492} & \textbf{0.0629} & 27.85\% \\
		& R@20 & 0.0303 & 0.0544 & 0.0415 & 0.0361 & 0.0657 & 0.0675 & 0.0659 & \underline{0.0733} & \textbf{0.0941}  & 28.38\% \\
		& N@10 & 0.0114 & 0.0197 & 0.0158 & 0.0120 & 0.0224 & 0.0242 & 0.0241 & \underline{0.0268} & \textbf{0.0341}  &  27.24\% \\
		& N@20 & 0.0138 & 0.0243 & 0.0192 & 0.0154 & 0.0284 & 0.0298 & 0.0296 & \underline{0.0330} & \textbf{0.0420}  & 27.27\% \\
		\bottomrule	
	\end{tabular}
	\label{tab:perform}	
\end{table*}

\subsection{Performance Comparison}
\paragraph[Effectiveness of \fm{}]{Effectiveness (\textbf{RQ1})}
Table~\ref{tab:perform} reports the comparison of recommendation accuracy in terms of Recall and NDCG, from which we have the following observations: 
1). Although LATTICE obtains the second best results on Baby and Clothing datasets, \fm{} improves LATTICE by an average of 19.07\% across all datasets and outperforms other baselines as well.
The improvements attribute to the graph structures denoising and freezing components of \fm{}. 
Denoising the user-item graph in $\bm{A}$ reduces the impact of noise signal in false-positive interactions.
The frozen item-item graph in \fm{} ensures the items are relevant to each other if linked because it built on the raw multimodal features. 
On the contrary, the latent item-item graph is dynamically learned via projecting the raw multimodal features into a low-dimensional space. 
The affinity of two items in the graph depends on not only the raw multimodal features but also the projectors (\ie MLPs). 
2). Generally, both general MF (\ie BPR) and graph-based (\ie LightGCN) models can benefit from the multimodal information.
For example, VBPR leveraging multimodal features outperforms BPR by up to 27.63\% on average. 
Most graph-based multimodal models except MMGCN show better recommendation results than MF models (\ie BPR and VBPR) and general CF models.
However, we observe that the performance of MMGCN is inferior to that of BPR on Sports dataset. One potential reason for this might be that the textual and visual representations extracted from the texts and images of products in the Sports dataset are less informative than those of the Baby and Clothing datasets. Another reason may be the fusion mechanism used in MMGCN, which sums up the features from all modalities, making it difficult to discern the prominent contribution of individual modalities. 
Our proposed \fm{} utilizing LightGCN as its backbone network gains a significant margin of 42.79\% over the original LightGCN on average.
3). In graph-based multimodal models, compared with MMGCN, GRCN utilizes simplified graph convolutional operations and refines the user-item bipartite graph by assigning weights based on the affinities of user preferences and item contents. It shows better performance in dense datasets, but is inferior in Clothing dataset. The reason is that the assignments of low weights on less similar user-item edges impede the information propagation of GCNs in a sparse dataset. 
DualGNN and SLMRec utilize auxiliary information to augment the representation learnings on users and items. They have comparable performance with each other on the evaluated dataset except Sports. 
One potential reason is that the extracted textual and visual representations from the texts and images of products in Sports are less informative than that of Baby and Clothing datasets.
The evidence is that in Table~\ref{tab:perform}, the improvements achieved by GRCN and DualGNN over LightGCN in Sports are not as significant as that of Baby and Clothing.
SLMRec augments the representation from each modality of an item with self-supervised learning and gains better results than GRCN and DualGNN in Sports.
Analogously, both LATTICE and \fm{} augment the representations of items via a latent item-item graph and show competitive results.

Overall, the experiment results in Table~\ref{tab:perform} validate the effectiveness of augmenting item representation with multimodal information. Furthermore, freezing the latent graph structure of the item as in \fm{} can help achieve even better performance.
\paragraph[Efficiency of \fm{}]{Efficiency (\textbf{RQ2})}
We report the memory and training time consumed by \fm{} and baselines in Table~\ref{tab:perform_eff}.
From the table, we observe: 1). Multimodal models usually utilize more memory than the general CF models as they need to process modality-aware features. Graph-based models perform convolutional operations on graphs and cost more time on training.
DualGNN consumes even more time in training because the convolutional operations are performed not only on the user-item graph but also on the user-user relationship graph.
MMGCN propagates convolutional operators on each modality, resulting in increased training time.
2). Compared \fm{} with LATTICE, the proposed \fm{} can reduce the memory cost and training time of LATTICE on Clothing by 6$\times$ and 4$\times$, respectively.
\fm{} removes the construction of latent item-item graph in each training epoch, it pre-builds the graph before training and freezes it during training.
With the removal of item-item graph construction in training, \fm{} is preferable with large graphs.

\begin{table*}[bpt]
	\small
	\centering	
	\caption{Comparison of \fm{} against state-of-the-art baselines on model efficiency.}
	\begin{tabular}{llccccccccc}
		\toprule
		\multirow{2}{*}{Dataset} & \multirow{2}{*}{Metric} & \multicolumn{2}{c}{General CF model} & \multicolumn{7}{c}{Multimodal model} \\
		\cmidrule(lr){3-4} \cmidrule(lr){5-11}
		& & BPR & LightGCN & VBPR & MMGCN & GRCN & DualGNN & SLMRec & LATTICE & \fm{} \\
		\midrule
		\multirow{2}{*}{Baby} & Memory (GB) & 1.59 & 1.69 & 1.89 & 2.69 & 2.95 & 2.05 & 2.08 & 4.53  &  2.13 \\
		& Time ($s$/epoch) & 0.47 & 0.99 & 0.57 & 3.48 & 2.36 & 7.81 & 1.91 & 1.61 & 1.25 \\
		\midrule
		\multirow{2}{*}{Sports} & Memory (GB) & 2.00 & 2.24 & 2.71 & 3.91 & 4.49 & 2.81 & 3.04 & 19.93 & 3.34 \\
		& Time ($s$/epoch) & 0.95 & 2.86 & 1.28 & 16.60 & 6.74 & 12.60  & 5.52 & 10.71 & 3.46 \\
		\midrule
		\multirow{2}{*}{Clothing} & Memory (GB) & 2.16 & 2.43 & 3.02 & 4.24 & 4.58 & 3.02 & 3.40 & 28.22 & 4.15 \\
		& Time ($s$/epoch) & 0.97 & 2.94 & 1.29 & 16.71 & 7.21 & 13.44 & 5.09 & 15.96 & 3.67 \\
		\bottomrule
	\end{tabular}
	\label{tab:perform_eff}	
\end{table*}

\subsection{Ablation Study (\textbf{RQ3})}
In this section, we decouple the proposed \fm{} and evaluate the contribution of each component with regard to recommendation accuracy.
We design the following variants of \fm{} based on its architecture: 
\begin{itemize}
	\item \textbf{\fm{}-D} denoises the user-item graph in \fm{} without freezing the item-item graph.
	\item \textbf{\fm{}-F} improves LATTICE merely by the freezing component of \fm{}, which freezes the item-item graph as introduced in Section~\ref{sec:freedom}.
	\item \textbf{\fm{}-R} replaces the denoising method of degree-sensitive edge pruning in \fm{} with random edge dropout~\cite{rong2020dropedge}. 
	\item \textbf{\fm{}-0} ablates the contribution of multimodal losses in Eq.~\eqref{eq:loss} by setting $\lambda$ to 0. 
\end{itemize}
We report the comparison results in terms of R@20 and N@20 with Fig.~\ref{fig:ab-comp}.
The results show that the freezing component of \fm{} attributes significantly to its overall performance.
However, denoising the user-item graph can further improve the performance.
Freezing the item-item graph in \fm{} (\ie \fm{}-F) gains consistent improvements over LATTICE on three datasets, but not for the denoising component. 
We speculate that without the freezing component, the denoising component may affect graph connectivity on sparse graphs and result in performance degradation. 
\fm{}-R with random edge dropout shows slight improvement over \fm{}-F but is still worse than \fm{}, showing the effectiveness of degree-sensitive edge pruning.
The performance of \fm{}-0 is comparable to \fm{}, indicating the effective design of \fm{} in freezing and denoising graph structures for recommendation. However, \fm{} can benefit slightly from the inclusion of multimodal losses.

\label{sec:ab}
\begin{figure}[bpt]
	\centering
	\begin{minipage}{0.46\textwidth}
		\includegraphics[width=\columnwidth, trim={7 0 7 0},clip]{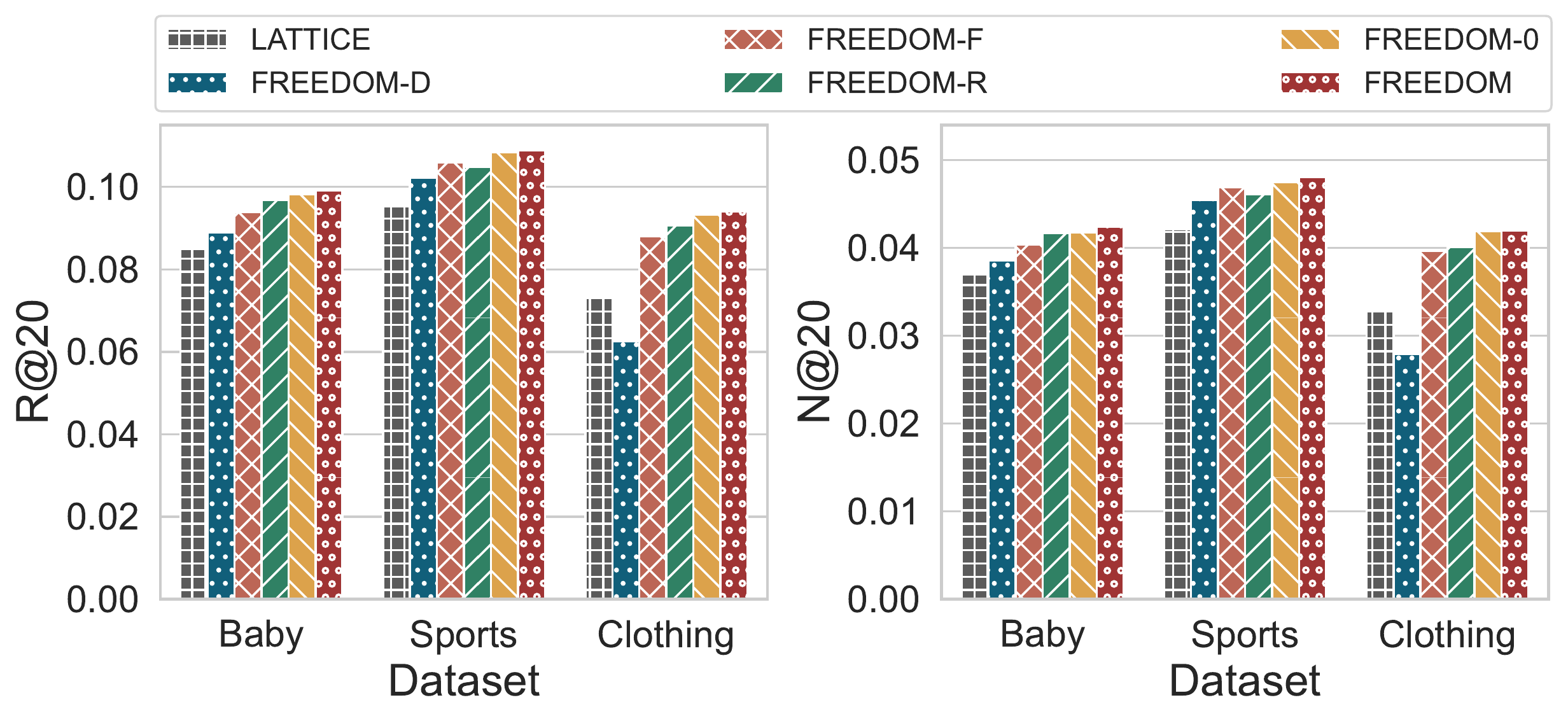} 
		\vspace{-14pt}
		\caption{Comparison of LATTICE with \fm{} variants.}
		\label{fig:ab-comp}
	\end{minipage}\hfill
	\begin{minipage}{0.46\textwidth}
		\includegraphics[width=\columnwidth, trim={15 0 15 0},clip]{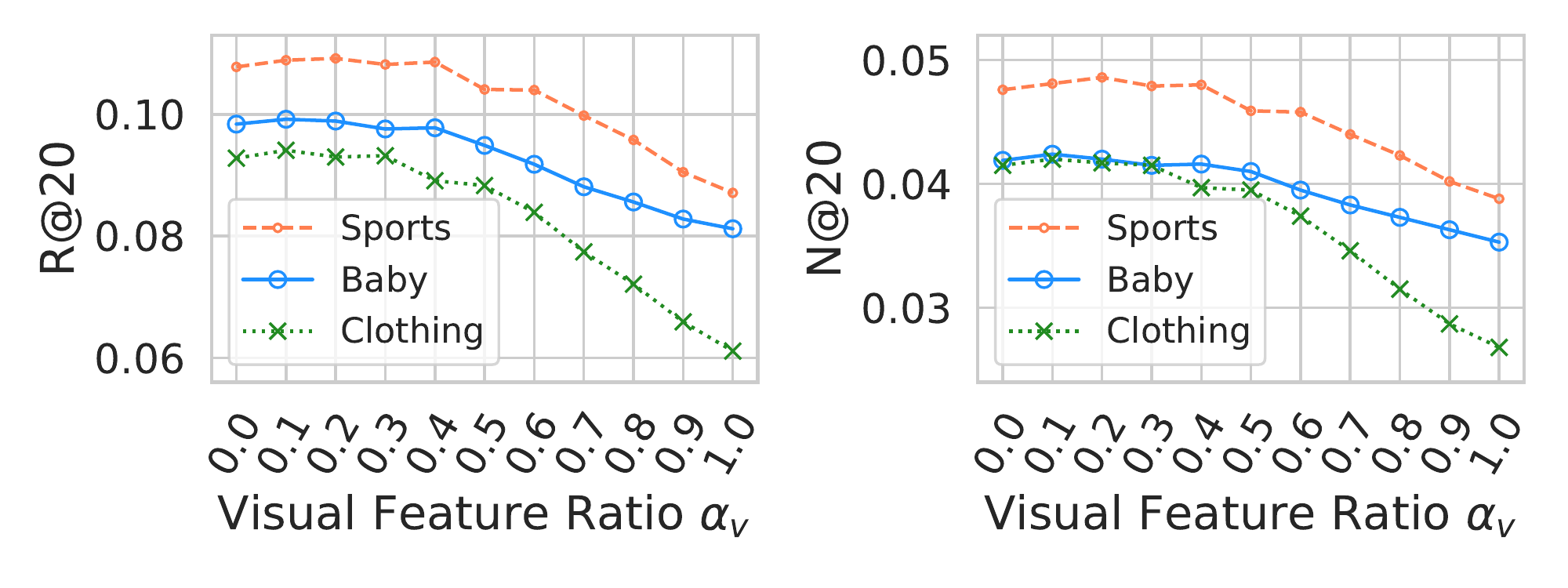} 
		\vspace{-15pt}
		\caption{Performance of \fm{} changes with the ratio of visual features in constructing the item-item graph.}
		\label{fig:ab-vt}
	\end{minipage}
\end{figure}

\begin{figure}[bpt]
	\centering
	\begin{subfigure}[b]{0.48\textwidth}
		\includegraphics[width=1\linewidth]{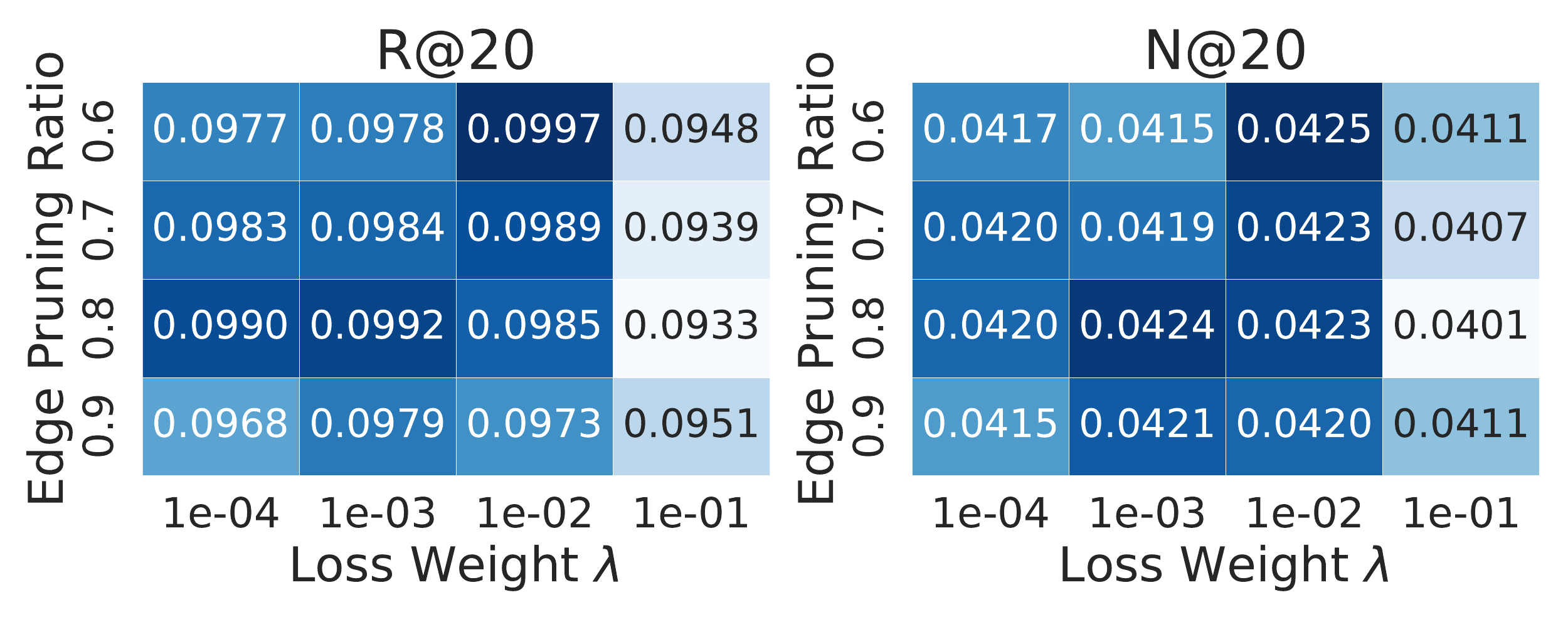}
		\vspace{-13pt}
		\caption{Baby}
		\label{fig:ht-baby20} 
	\end{subfigure}
	\hfill
	\begin{subfigure}[b]{0.48\textwidth}
		\includegraphics[width=1\linewidth]{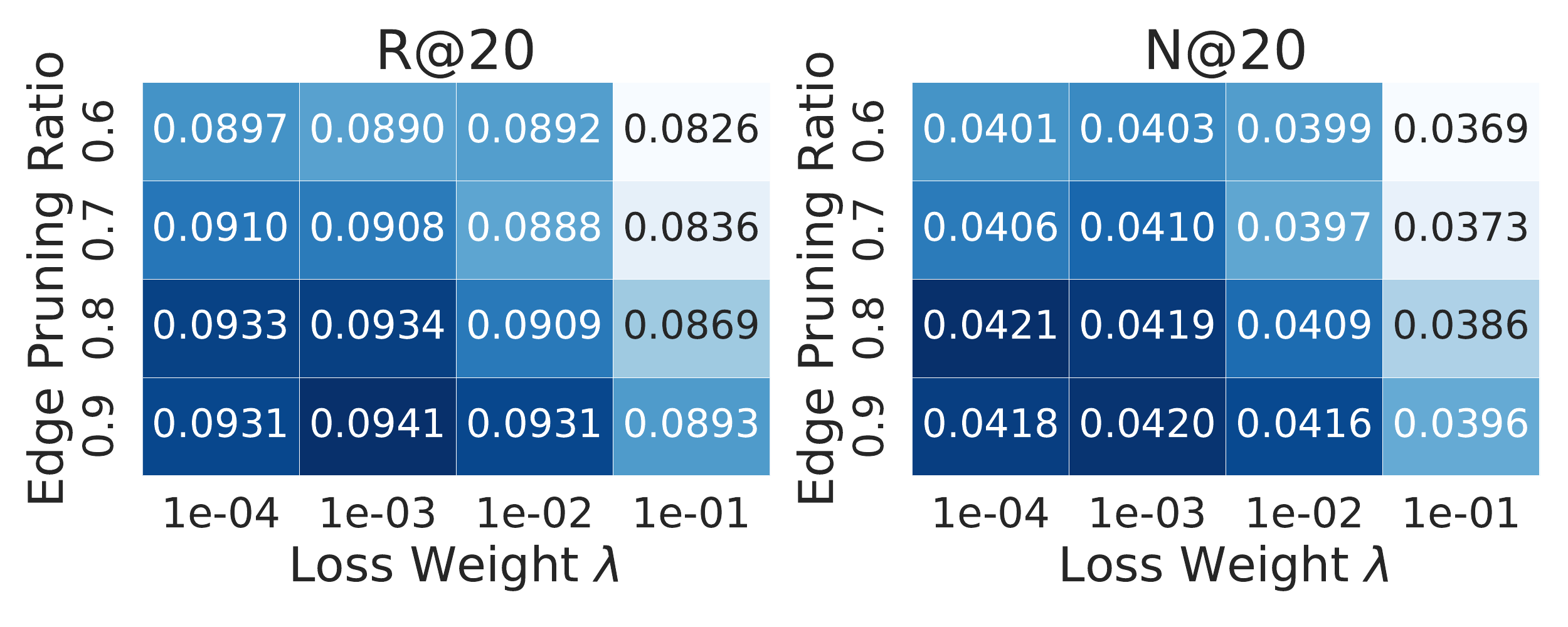}
		\vspace{-13pt}
		\caption{Clothing}
		\label{fig:ht-clothing20}
	\end{subfigure}
	\caption[]{Performance of \fm{} with regard to different loss weights $\lambda$ and edge pruning ratios $\rho$ on Baby and Clothing datasets.}
\end{figure}

\subsection{Hyperparameter Sensitivity Study (\textbf{RQ4})}
\textbf{Multimodal Features.} \fm{} uses the raw textual and visual features extracted from pre-trained models to construct the latent item-item graph. 
We first study how information from different modalities could affect the performance of \fm{} by adjusting the ratio of visual features from 0.0 to 1.0 with a step of 0.1 in constructing the graph.
A ratio of 0.0 on visual features means that the construction of item-item graph depends only on textual features. 
The item-item graph, on the other hand, is based on visual features if the ratio is 1.0.
The recommendation results on R@20 and N@20 for three evaluated datasets are shown in Fig.~\ref{fig:ab-vt}.
From the results, we may infer that the textual features are more informative than the visual features in constructing an effective item-item graph.

\textbf{The Dropout Ratio and Loss Weight.} 
We vary the edge pruning ratio $\rho$ in the denoising component of \fm{} from 0.6 to 0.9 with a step of 0.1, and vary the loss weight $\lambda$ in \{1e-04, 1e-03, 1e-02, 1e-01\}. Fig.~\ref{fig:ht-baby20} and~\ref{fig:ht-clothing20} show the performance achieved by \fm{} under different combinations of embedding dropout ratios and weights on Baby and Clothing, respectively. 
The results suggest a high edge pruning ratio of \fm{} on large graphs.
Compared with the edge pruning ratio, the performance of \fm{} is less sensitive to the settings of loss trade-off $\lambda$. 
However, placing a high weight on multimodal loss might limit the expressive power of \fm{} and results in performance degradation.

\section{Conclusion}
In this paper, we experimentally reveal that the graph structure learning in a state-of-the-art multimodal recommendation model (\ie LATTICE~\cite{zhang2021mining}) plays a trivial role in its performance.
It is the item-item graph constructed from raw multimodal features that contributes to the recommendation accuracy.
Based on the finding, we propose a model that freezes the item-item graph and denoises the user-item graph simultaneously for multimodal recommendation.
Through both theoretical and empirical analysis, we demonstrate that freezing the item-item graph in \fm{} can yield various benefits.
In denoising, we devise a degree-sensitive edge pruning method to sample the user-item graph, which shows better performance than the random edge dropout~\cite{rong2020dropedge} for recommendation.
Finally, we conduct extensive experiments to demonstrate the effectiveness and efficiency of \fm{} in multimodal recommendation.

\begin{acks}
	This work was supported by Alibaba Group through Alibaba Innovative Research (AIR) Program and Alibaba-NTU Singapore Joint Research Institute (JRI), Nanyang Technological University, Singapore.
\end{acks}

\bibliographystyle{ACM-Reference-Format}
\bibliography{reference}

\appendix
\section{Proof of LEMMA 4.1}
\label{append:lemma}
Let $s'_{ij}$ represent the element in the adjacency matrix of LATTICE and let $s_{ij}$ represent the corresponding element in FREEDOM. We can derive that $s_{ij} = 1 \geq s'_{ij}$. As defined in the paper, $\bm{D}'=\mathrm{diag}(d'_{11}, \cdots, d'_{nn})$ is a diagonal matrix where each entry on the diagonal is equal to the row-sum of the adjacency matrix $d'_{ii} = \sum_j s'_{ij}$. The normalized adjacency matrix $\widetilde{\bm{S}'}$ in LATTICE can be denoted as follows:

\begin{align*}
	\widetilde{\bm{S}'} = 
	\begin{bmatrix}
		1/d'_{11}       & {s'}_{12}/\sqrt{d'_{11}d'_{22}}  & \dots & {s'}_{1n}/\sqrt{d'_{11}d'_{nn}} \\
		{s'}_{21}/\sqrt{d'_{22}d'_{11}} & 1/d'_{22} & \dots & {s'}_{2n}/\sqrt{d'_{22}d'_{nn}} \\
		\vdots &  \vdots &  & \vdots \\
		{s'}_{n1}/\sqrt{d'_{nn}d'_{11}} &  {s'}_{n2}/\sqrt{d'_{nn}d'_{22}} & \dots & 1/d'_{nn}
	\end{bmatrix}
\end{align*}

Analogously, we can derive the normalized adjacency matrix in \fm{}.
We denote the elements in the normalized adjacency matrices of $\widetilde{\bm{S'}}$ and $\widetilde{\bm{S}}$ as $\tilde{s'}_{ij}$ and $\tilde{s}_{ij}$, respectively.
As stated in the paper, the function $g(\tilde{s'}_{ij}) = \tilde{s'}_{ij}/(\tilde{s'}_{ij} +\sum_{k\neq i} \tilde{s'}_{kj})$ is monotonically increasing. To maximize Eq.~(13) in the paper, $\tilde{s'}_{ij}$ must also be maximized. From the above equations, we can deduce that the maximum value for either $\widetilde{\bm{S}}$ or $\widetilde{\bm{S'}}$ lies on its diagonal. Therefore, we can derive that $max_{i,j}\tilde{s}_{ij} = 1/d_{ii} = 1/\sum_j s_{ij} \leq max_{i,j}\tilde{s'}_{ij} = 1/d'_{ii} = 1/\sum_j s'_{ij}$.

\end{document}